\DeclareMathOperator*{\argmax}{argmax}
\newtheorem{theorem}{Theorem}
\newtheorem{lemma}[theorem]{Lemma}
\newcommand{\calG}{\mathcal{G}}
\newcommand{\calI}{\mathcal{I}}
\newcommand{\calS}{\mathcal{S}}
\newcommand{\calT}{\mathcal{T}}
\newcommand{\calU}{\mathcal{U}}
\newcommand{\calX}{\mathcal{X}}
\newcommand{\calZ}{\mathcal{Z}}
\newcommand{\calGG}{\mathds{G}}
\newcommand{\matW}{\mathbf{W}}
\newcommand{\matA}{\mathbf{A}}
\newcommand{\bK}{{\bf k}}
\newcommand{\Gs}{{\bf Gscore}}
\title{SAGA: A Submodular Greedy Algorithm for Group Recommendation}
\author{ Shameem A Puthiya Parambath \\
    QCRI, HBKU, Doha, Qatar\\
    spparambath@hbku.edu.qa\\
    \And Nishant Vijayakumar \\
    Apptopia Inc., Boston, USA\\
    nishant.vijayakumar@gmail.com\\ 
    \And  Sanjay Chawla\\
    QCRI, HBKU, Doha, Qatar\\
    schawla@hbku.edu.qa\\
}
\begin{document}

\maketitle
\begin{abstract}
In this paper, we propose a unified framework and an algorithm for the problem of group recommendation where a fixed number of items or alternatives can be recommended to a group of users.
The problem of group recommendation arises naturally in many real world contexts, and is closely related to the budgeted social choice problem studied in economics.
We frame the group recommendation problem as choosing a subgraph with the largest group consensus score in a completely connected graph defined over the item affinity matrix.
We propose a fast greedy algorithm with strong theoretical guarantees, and show that the proposed algorithm compares favorably to the state-of-the-art group recommendation algorithms according to commonly used relevance and coverage performance measures on benchmark dataset.
\end{abstract}

\section{Introduction}
Recommender systems can be regarded as a particular instance of a personalized information retrieval system where the explicitly or implicitly learned user profiles/features serve the purpose of a `query' in the traditional IR system.
Recommender systems provide effective information exploration by proposing a set of relevant items/alternatives to users.
Many efficient and scalable algorithms exist for personalized recommender systems, carefully tailored for many desirable properties of the recommendation set like relevance, diversity, coverage etc \cite{ekstrand2011collaborative,Steck_11_RecSys,puthiya2016coverage}.
In contrast to personalized recommender systems, a group recommender (GR) system has to provide recommendations for a set of users jointly, returning a set of items in consensus with diverse user preferences.
The user interests might be often conflicting in nature.
A number of real world applications can be formulated within the framework of choosing a subset of items from a given item set to appeal to a group of users.
For example, group discount vendors have to choose a set of items that best appeals to the set of diverse users in a given city; similarly flight operators have to choose a set of movies to play in a plane's entertainment system.
The problem is also closely related to the budgeted social choice problem studied in economics \cite{lu2011budgeted,piotr15_collect}.

In recommender systems, items are assumed to have intrinsic preference scores for each user, which represent the level of affinity of the user to the item.
We assume the preference scores to be on an ordinal scale, typically in the range of 1-5, with higher values indicating stronger preference.
A simple and straightforward approach to group recommendation is to aggregate the preference scores i.e combine the preference scores of individual group members for an item 
 to form a joint ranking of the items and select the top-$\bK$ items \cite{masthoff2011group,baltrunas2010group}.
Most commonly used aggregation functions are \emph{(i)} \textit{average misery}; recommending items with highest average preference scores among the group members, \emph{(ii)} \textit{least misery}; recommending items that maximize the minimum preference scores among users and \emph{(iii)} \textit{plurarility}; recommending items that maximize the number of ``highest preference scores'' \cite{masthoff2011group,amer2009group}.
But such score aggregation strategies completely disregard the user disagreement or the individual user satisfaction for  individual items as discussed in \citeauthor{amer2009group} (\citeyear{amer2009group}).
To get a more balanced recommendation, it is important to consider the disagreements between users of the group for the same item.
Thus the problem can be reformulated as a bi-objective optimization problem where one objective corresponds to the agreement (relevance) of the item to the users and the other objective corresponds to the disagreement between the users for the item.
The general procedure to solve a multi-objective optimization problem is to take the convex combination of the objectives and perform homotopy continuation, also referred as scalarization or weighted-sum approach \cite{EhrgottGandibleuxbook02}.
Similar to aggregation strategies, scalarization based algorithms assume that preference scores for the test items are available, but in practice one has to run a baseline recommender system as a first step to generate the preference scores.

Inspired by the recent work on personalized diverse recommendations \cite{puthiya2016coverage}, in this paper we propose a general framework for algorithms for group recommendations.
We propose a generic objective function by framing the problem as a relevant group interest coverage in an affinity graph defined over the item features.
The crux of our approach is that the item-item and user-user interactions are captured in a single objective.  
Unlike score aggregation strategies, we do not require the preference scores of the test items to be generated beforehand.
We give two instances of the objective where the GR problem reduces to a submodular function maximization problem.
The `diminishing return' property of submodular functions allows us to trade-off the redundancy and relevance, thus making the recommended set containing relevant items covering a wider spectrum of group interest.
An intuitive graphical representation of the underlining idea behind our approach is given in Figure~\ref{fig:art_data}.
The figure contains preferences of a group of three users spanning three item classes, users being represented by the blue '+','*' and '$\times$' symbols.
None of the three users have common item classes, but the group covers all the item classes.
Hence the optimal recommendation strategy is to recommend relevant items from the three item clusters.
The red square is the recommendation made by our proposed algorithm, whereas the baseline algorithms recommendations covered only two item clusters (See Section~\ref{sec:exp} for details).

\begin{figure}[tb!]
\centering
\frame{\includegraphics[width=0.47\textwidth]{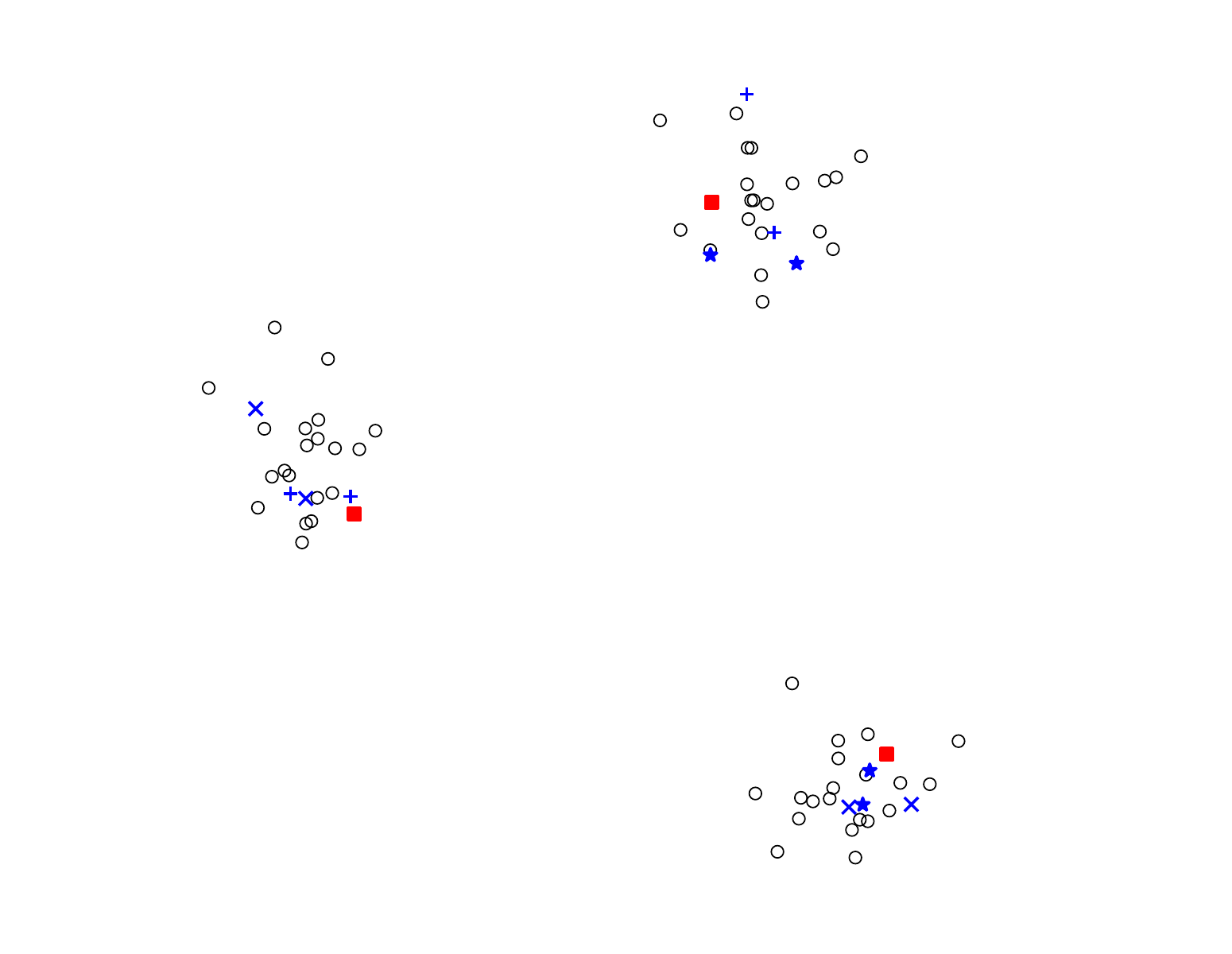}}
\caption{Graphical intuition of the proposed approach}
\label{fig:art_data}
\end{figure}

The remainder of this paper is structured as follows.
We give a brief overview of group recommendation algorithms in Section~\ref{sec:rel_work}.
Section~\ref{sec:group_rmd} describes our framework and relationship to other group recommendation strategies .
We carry out large scale experiments on benchmark data and report the results in Section~\ref{sec:exp}.
We conclude the paper in Section~\ref{sec:conc}.

\section{Related Work}
\label{sec:rel_work}
Most of the earlier work on group recommendation is based on a model where the individual group member attributes are aggregated to generate the group recommendation \cite{masthoff2011group}.
Based on this model, state of the art algorithms can be further classified as profile aggregation based and score aggregation based.
In the first approach, individual user profiles are aggregated to create a group profile.
The group profile can be seen as a single artificial user capturing the group dynamics, and state of the art personalized recommendation algorithm can be used to produce group recommendations for this artificial user.
In the second approach also, the group recommendation is produced in two steps.
In the first step, preference scores for each individual group member for the test items are generated by running a collaborative filtering algorithm.
The second step involves running a score aggregation algorithm on the preference scores obtained in the first step to generate the group recommendations. 
In one of the earliest work on group recommendation, \citeauthor{o2001polylens} (\citeyear{o2001polylens}) proposed an algorithm where the individual user ratings are generated using nearest neighbor models for collaborative filtering, and employing score aggregation strategies like \textit{least misery} and \textit{average misery} to obtain group recommendations.
Similarly, in \cite{kim2010group}, the authors proposed an algorithm by aggregating the individual user profiles to create a group profile and using neighborhood models to generate recommendations for the newly created group profile.
This algorithm comes under the first approach explained above.
\citeauthor{masthoff2011group} (\citeyear{masthoff2011group}) gives an overview and use cases of commonly employed scoring strategies in recommender systems.
However, both the \textit{score aggregation} and \textit{profile aggregation} approaches fail to attribute for the individual user disagreements for different items \cite{amer2009group}.

\citeauthor{amer2009group} (\citeyear{amer2009group}) advocated that a better group recommendation strategy can be devised by considering the disagreements between the individual users in the group for the same item.
They proposed an algorithm where the recommendation set is obtained by solving a discrete optimization problem.
The objective function is a linear combination of an agreement term and a disagreement term.
It is an instantiation of the scalarization algorithm for the multi-objective optimization problem.
The proposed algorithm requires one to run a baseline personalized recommender system to generate preference scores for every user.
Moreover, many variants of the above discussed scalarization based group recommendation schemes are proposed in domain specific settings and we direct the reader to the survey papers by \citeauthor{jameson2007recommendation} (\citeyear{jameson2007recommendation}) and \citeauthor{boratto2010state} (\citeyear{boratto2010state}), and the references therein.
One of the difficulties in developing group recommendation algorithms is the lack of proper evaluation strategies.
In \citeauthor{baltrunas2010group} (\citeyear{baltrunas2010group}) , the authors propose a methodology for offline evaluation of group recommendation algorithms.
Recent works make use of expensive online evaluation strategies \cite{Sanchez2014,boratto2016discovery,Liu2016,khoshkangini2016self}

In economics, group recommendation is studied as a special case of the budgeted social choice problem \cite{lu2011budgeted,piotr15_collect}.
The budgeted consensus recommendation is often considered as a trade-off between pure personalized recommendation and pure consensus recommendation.
In group recommendation, the budget is specified as a function of the recommendation size, keeping the other cost components constant.
Most of the work on budgeted social choice theory assumes that user preferences for the items are strict, i.e. every user derives benefit from at most one item that has the highest preference score (utility score), and preference scores are often derived using positional scoring rules like Borda count.
\citeauthor{lu2011budgeted} (\citeyear{lu2011budgeted}) proposed a greedy algorithm using knapsack heuristics, but it worked very poorly on group recommendation tasks with positional scoring rules.
\citeauthor{piotr15_collect} (\citeyear{piotr15_collect}) extended the theoretical framework for limited choice models with positional scoring rules using the ordered weighted average (OWA) operators.

Unlike the above mentioned approaches, our approach proceeds in a single step by carrying out preference estimation implicitly and thus avoids running a baseline personalized recommender system to generate the \textit{unseen} test preference scores.
The trade-off between the agreement and the disagreement components is dealt with by the exact definition of the \emph{user saturation function}, \emph{item saturation function} and the affinity functions, instead of explicitly trading-off of both the terms.
Our work is inspired by \citeauthor{puthiya2016coverage} (\citeyear{puthiya2016coverage}) where the authors propose a personalized diverse recommendation algorithm.
Nonetheless, our work differs from theirs significantly.
We introduce group consensus score as a function of the \textit{user saturation function} and the \textit{item saturation function}.
The exact notion of \textit{user saturation function} is motivated by the \textit{item saturation function} defined in \citeauthor{puthiya2016coverage} (\citeyear{puthiya2016coverage}), to capture the coverage of an item observed by a user.
Our proposed algorithm is more general and the algorithm proposed in \citeauthor{puthiya2016coverage} (\citeyear{puthiya2016coverage}) is a special case of our approach in that when the group consists of a single user, the \textit{item saturation function} is a concave function and the \textit{user saturation function} is the identity function.
By defining the \textit{user coverage function} to be submodular, we aim to \emph{(i)} recommend items which are relevant to the majority of the group and \emph{(ii)} discourage the recommendations to be centered around a few user.

\section{Group Recommendation}
\label{sec:group_rmd}
We are given a set $\calX$ of $n$ items and a set $\calU$ of $m$ users.
The set of users forms a set of groups $\calGG$.
The group dynamics can evolve over time, and a user may or may not be a member of a group.
Also, a user can be part of just a single group or multiple groups.
We also assume the existence of an affinity function over the item space $h: \calX \times \calX \rightarrow \Re_+$.
Similarly, we assume the existence of an affinity function over the user space $c: \calU \times \calU \rightarrow \Re_+$.
We assume that the affinity functions $h$ and $c$ are non-negative.
There is no assumption regarding the symmetricity or transitivity of the affinity functions.
Each user has a preference score (utility score or relevance score) for each of the item denoted using an ordinal number, typically in the range 1-5 with higher value indicating stronger preference.
A user can have same preference score for different items.

Given the item set $\calX$ and the corresponding item-item affinity matrix $\matW$, we can view the pair $(\calX,\matW)$ as a complete graph where the edge between nodes $i,j \in \calX$ is weighted according to the value $\matW_{ij}$.
Given a group $\calG \in \calGG$, the user-user affinity matrix $\matA$ and the observed preference scores for the set of users in the group $\calG$ for the item set $\calI \subset \calX$, the item set $\calI$ defines a subgraph of the complete graph $(\calX,\matW)$.
We can define the group recommendation task as retrieving a subset of $\bK$ items from the set $\calZ = \calX \setminus \calI$ such that it \emph{(i)} \textit{covers} most of the observed items and has high preference scores in the set $\calI$ and \emph{(ii)} \textit{covers} a larger spectrum of users in the group $\calG$.
We introduce group consensus score to capture the above two aspects of the group recommendation.
Formally, we define the group consensus score for a set of items $\calS \subseteq \calZ$ with cardinality $\bK$ with respect to the set $\calI$ for the given group $\calG$ as,

\begin{equation}
\begin{split}
\Gs(\calG,\calI,\calS) &= \\
 \sum\limits_{u \in \calG} g_u&\Big(\sum\limits_{l \in \calG \setminus \{u\}}\matA_{ul} \sum\limits_{i \in \calI_u} r_u^if\big(\sum\limits_{j\in\calS} \matW_{ij}\big)\Big)
\end{split}
\end{equation}
where $g_u:\Re_+ \rightarrow \Re$ is the \textit{user saturation function},  $f:\Re_+ \rightarrow \Re_+$ is the non-negative \textit{item saturation function}, $\calI_u$ is the set of observed items in $\calI$ for the user $u$, and $r_u^i$ is the observed preference score for the item $i$ by user $u$.
The rationale behind the objective function is to find the items similar to the observed relevant items which appeals for the entire group, as demonstrated by higher value for $\Gs$.
By choosing proper \textit{item saturation} and \textit{user saturation} functions, we aim to produce a consensus set of recommendations.
Given a fixed \textit{user saturation function} $g_u$ and \textit{item saturation function} $f$, a set of $\bK$ items for the group $\calG$ with the highest group consensus score can be obtained by solving the optimization problem
\begin{equation}
\argmax_{\substack{\calS \subseteq \calZ \\ \lvert\calS\lvert \leq \bK}} \Gs(\calG,\calI,\calS)
\label{eq:opt_set}
\enspace.
\end{equation}
Before further discussion into the design details of the \textit{saturation functions} $g_u$ and $f$, we describe the generality of the objective function \eqref{eq:opt_set}.
It can be shown that many well-known group recommendation strategies are special cases of the objective function \eqref{eq:opt_set}. 

\subsection{Special Cases}
For different formulations of the \textit{user saturation function} $g_u$, the optimization problem in \eqref{eq:opt_set} corresponds to \textit{score aggregation} strategies currently employed in group recommendation tasks.

\subsubsection{Identity item saturation function}
Here, we see that by fixing the \textit{item saturation function} ($f$) to be the identity function i.e. $f(x) = x$ and varying the \textit{user saturation function} $g_u$ gives rise to many score aggregation strategies.
We will also assume that $\matA$ is the indicator matrix for the given group $\calG$ i.e. $\matA_{ij} = 1 $ if $i$ and $j$ are in group $\calG$ , zero otherwise.

\smallskip\noindent\textit{Least Misery:} If $g_u$ is set as the constant function $\min$ i.e. $g_u(x) = \min(x), \forall u \in \calG$, then the optimization problem corresponds to selecting $\bK$ items which maximize the lowest preference scores on the test set, among all members in the group.
This is equivalent to the \textit{least misery} aggregation strategy \cite{masthoff2011group}.

\smallskip\noindent\textit{Most Pleasure:}
Similarly, by setting $g_u$ to the $\max$ function i.e. $g_u(x) = \max(x), \forall u \in \calG$, the optimization problem in \eqref{eq:opt_set} becomes the \textit{most pleasure} scoring strategy \cite{masthoff2011group}.
$\bK$ items that maximize the highest preference scores for the items in the test set, among all members in the group are recommended.

\smallskip\noindent\textit{Average Misery:}
If $g_u$ is set as the constant function $g_u(x) = \frac{1}{\lvert \calG \lvert}, \forall \calG \in \calGG$, then the optimization problem reduces to the average misery approach, where the items that maximize the average preference scores among all the members in the group are recommended \cite{masthoff2011group}.

\smallskip\noindent\textit{Plurality:}
If $g_u$ is the indicator function such that it takes the value one when the preference score is the highest (ties are broken arbitrary) and zero otherwise, then the optimization problem corresponds to the plurality scheme \cite{masthoff2011group}.

\smallskip\noindent\textit{Ordered Weighed Average Operators:}
If the inner term $\sum\limits_{i \in \calI_u} r_u^if\big(\sum\limits_{j\in\calS} \matW_{ij}\big)$ is assumed to be in the sorted order, and for a general $g_u: \Re_+ \rightarrow \Re_+$ , the group consensus function corresponds exactly to the ordered weighted average (OWA) operators studied by \citeauthor{piotr15_collect} (\citeyear{piotr15_collect}).
In this case, many different functional formulations of $g_u$ will result in different group recommendation strategies like $\bK$-best OWA, Hurwicz OWA etc (See \citeauthor{piotr15_collect} (\citeyear{piotr15_collect}) for details).

\subsubsection{Identity user saturation function}
Here, we fix the \textit{user saturation function} ($g_u$) to be the identity function i.e. $g_u(x) = x$ and vary the \textit{item saturation function} $f$.

\textit{Diverse Personalized Recommendation:}
In case of single user groups ($\matA$ will be the identity matrix), when the \textit{item saturation function} is concave, the objective function reduces to the one studied by \citeauthor{puthiya2016coverage} (\citeyear{puthiya2016coverage}) for diverse personalized recommendations.
For other functional forms of $f$,  we get different personalized recommendation schemes \cite{puthiya2016coverage}.

\textit{Time Evolving Group Dynamics:}
The individual user preferences are dynamic and it can evolve over time.
User's interest for activities like shopping, watching movies etc, can fade or strengthen over time.
It is important to adjust the parameters of the algorithm to reflect the evolving group dynamics by lowering the significance of the ``fading'' user and highering the significance of the ``strengthening'' user.
One way to achieve this is by re-weighting the individual user contribution in a group when recommending new items.
This can be done by defining the \textit{user saturation function} $g_u$ as $g_u(x) = \frac{x}{t_u}$, where $t_u$ is the ``\textit{transportation time}'' of user $u$ in the given group.
One can define the ``\textit{transportation time}'' as a function of user's activity time in the group where a frequent loyal user will have low \textit{transportation time} whereas occasional user will have high \textit{transportation time}.

\subsection{Design of Saturation Function}
Ideally, we want the recommended set to include relevant yet items that appeals to a larger spectrum of users in the group.  
We design \textit{saturation functions} such that the $\Gs$ has the ``diminishing return'' property.
In our settings, the ``diminishing return'' property ensures that once a relevant item with respect to one (or few) user(s) is added, the marginal gain of adding another relevant item for the same user(s) is less than adding a relevant item with respect to the remaining users.
In particular, the ``diminishing return'' property of the \textit{item saturation function} helps to select relevant yet diverse items, whereas the ``diminishing return'' property of the \textit{user saturation function} encourages user coverage. 
The ``diminishing return'' is the characterizing property of submodular functions, and we choose the \textit{saturation functions} to be such that group consensus score is a submodular function.

We discuss two settings.
In the first settings, we propose a simple extension of the algorithm proposed in \citeauthor{puthiya2016coverage} (\citeyear{puthiya2016coverage}) by choosing $g_u$ to be the identity function.
\begin{lemma}
For any concave function $f$ and the identity function $g_u$, $\Gs$ is a submodular function (with respect to inclusion in $\calS$). If $f$ is monotonic then $\Gs$ is also monotonic.
\label{th:lemma1}
\end{lemma}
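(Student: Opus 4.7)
The plan is to reduce $\Gs$ to a nonnegative linear combination of well-studied ``concave-of-modular'' functions and then invoke the fact that submodularity and monotonicity are preserved by such combinations. Setting $g_u(x)=x$, the objective rewrites as
\begin{equation*}
\Gs(\calG,\calI,\calS) \;=\; \sum_{u\in\calG}\sum_{l\in\calG\setminus\{u\}}\sum_{i\in\calI_u} \matA_{ul}\, r_u^i\, f\!\left(\sum_{j\in\calS}\matW_{ij}\right),
\end{equation*}
which, after collecting coefficients, is a sum of terms of the form $\alpha_i\, f(m_i(\calS))$ with $\alpha_i\geq 0$ (since $\matA$ and $r_u^i$ are nonnegative) and $m_i(\calS) := \sum_{j\in\calS}\matW_{ij}$ a nonnegative modular set function.

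The core step is to prove the classical lemma that, for any concave $f:\Re_+\to\Re_+$ and any nonnegative modular $m$, the set function $F(\calS):=f(m(\calS))$ is submodular on $\calZ$. I would verify this from the diminishing-returns definition: fix $A\subseteq B\subseteq \calZ$ and $x\in\calZ\setminus B$, set $a=m(A)$, $b=m(B)$, and $w=\matW_{ix}\geq 0$, and note $a\leq b$ since $\matW\geq 0$. Because $f$ is concave on $\Re_+$, its secant slopes are nonincreasing, so
\begin{equation*}
f(a+w)-f(a) \;\geq\; f(b+w)-f(b),
\end{equation*}
which is exactly $F(A\cup\{x\})-F(A)\geq F(B\cup\{x\})-F(B)$. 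This is the only nontrivial inequality in the proof, and it is standard.

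With this lemma in hand, each summand $\alpha_i f(m_i(\calS))$ in the expansion of $\Gs$ is submodular, and submodularity is closed under nonnegative linear combinations; hence $\Gs$ is submodular in $\calS$. For the monotonicity claim, assume further that $f$ is nondecreasing. Since $\matW_{ij}\geq 0$, each $m_i$ is monotone nondecreasing under set inclusion, so $f\circ m_i$ is nondecreasing; a nonnegative linear combination of monotone nondecreasing set functions is itself monotone nondecreasing, so $\Gs$ is monotone, completing the argument.

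The only real obstacle is the concave-of-modular submodularity step, and even there the work amounts to invoking the secant-slope characterization of concavity after arranging $a\leq b$; the rest is bookkeeping to pass from that single inequality through the nonnegative-coefficient sum that defines $\Gs$.
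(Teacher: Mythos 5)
Your proof is correct, and it is essentially the argument the paper relies on: the paper gives no proof of its own for this lemma, deferring to Bach (2013), where the result is established exactly as you do it --- decompose $\Gs$ into a nonnegative combination of terms $f(m_i(\calS))$ with $m_i$ nonnegative modular, prove concave-of-nonnegative-modular is submodular via the secant-slope (diminishing differences) property of concave functions, and close under nonnegative linear combination. Your treatment of monotonicity is likewise the standard one and uses only the paper's stated nonnegativity assumptions on $\matW$, $\matA$, and $r_u^i$.
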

The proof of Lemma \ref{th:lemma1} can be found in \citeauthor{bach2013learning} (\citeyear{bach2013learning}).
In the second setting, we choose $g_u$ to be a concave function.

\begin{lemma}
For any monotone concave function $f$ and any concave funtion $g_u$, $\Gs$ is a submodular function (with respect to inclusion in $\calS$). If $g_u$ is monotonic then $Gscore$ is a monotonic submodular function.
\label{th:lemma2}
\end{lemma}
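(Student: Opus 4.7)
The plan is to reduce Lemma~\ref{th:lemma2} to Lemma~\ref{th:lemma1} combined with the standard composition rule for submodular functions. For each user $u \in \calG$, set
\[
\phi_u(\calS) \;=\; \sum_{l \in \calG \setminus \{u\}} \matA_{ul} \sum_{i \in \calI_u} r_u^i \, f\Big(\sum_{j \in \calS} \matW_{ij}\Big),
\]
so that $\Gs(\calG,\calI,\calS) = \sum_{u \in \calG} g_u(\phi_u(\calS))$. The strategy is: first show each $\phi_u$ is monotone nondecreasing submodular in $\calS$; second apply a composition lemma to show each $g_u \circ \phi_u$ is submodular (and monotone when $g_u$ is monotone); third use that a nonnegative sum of (monotone) submodular functions is (monotone) submodular.

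For the first step I would appeal directly to Lemma~\ref{th:lemma1}: with identity user-saturation and concave $f$, the function $\sum_{i \in \calI_u} r_u^i f\!\big(\sum_{j \in \calS} \matW_{ij}\big)$ is submodular in $\calS$, and taking the nonnegative combination with weights $\matA_{ul}$ preserves submodularity. Monotonicity of $\phi_u$ follows because $f$ is monotone nondecreasing and $\matA_{ul}, r_u^i, \matW_{ij} \geq 0$, so enlarging $\calS$ can only enlarge every inner argument $\sum_{j \in \calS} \matW_{ij}$ and hence every $f$-value. For the second step I would invoke the classical fact that if $\phi$ is monotone nondecreasing submodular and $g$ is concave and nondecreasing, then $g \circ \phi$ is submodular. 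The short proof: for $A \subseteq B$ and $x \notin B$, set $a = \phi(A) \leq b = \phi(B)$ and $\delta_A = \phi(A \cup \{x\}) - a \geq \phi(B \cup \{x\}) - b = \delta_B \geq 0$; then monotonicity of $g$ gives $g(a+\delta_A) - g(a) \geq g(a+\delta_B) - g(a)$, and concavity of $g$ applied at $a \leq b$ with common step $\delta_B$ gives $g(a+\delta_B) - g(a) \geq g(b+\delta_B) - g(b)$, chaining to the marginal-decrease inequality for $g \circ \phi$. Summing over $u \in \calG$ preserves submodularity; if additionally every $g_u$ is nondecreasing, each $g_u \circ \phi_u$ is monotone nondecreasing, and so is the sum, yielding the monotonicity claim.

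The main obstacle I anticipate is the composition step, because monotonicity of $g_u$ is actually required there, not just concavity: a concave $g_u$ that decreases past its peak can destroy submodularity of $g_u \circ \phi_u$ even when $\phi_u$ is monotone submodular (one can check this with a two-element ground set and a $g_u$ of the form $-x^2 + cx$). I would therefore read ``concave function $g_u$'' in the statement as monotone nondecreasing concave, mirroring the way $f$ is called ``monotone concave,'' and state this interpretation explicitly at the start of the proof so that the composition argument, and therefore the entire chain, goes through as sketched.
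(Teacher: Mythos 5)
Your proof takes the same route as the paper's: peel off the outer sum over $u$, use Lemma~\ref{th:lemma1} to get that the inner quantity $z(\calS)=\sum_{l \in \calG \setminus \{u\}}\matA_{ul}\sum_{i \in \calI_u} r_u^i f\big(\sum_{j\in\calS}\matW_{ij}\big)$ is monotone submodular, and then compose with the concave $g_u$; the two-step chaining inequality you spell out is exactly the one the paper asserts in a single line.

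Your caveat about the composition step is well-founded, and it exposes a genuine gap in the paper's own statement and proof. The paper claims the key inequality holds ``due to the concavity of $g$ and monotonicity of $z$'' alone, but concavity of $g_u$ without monotonicity does not suffice: take $g_u(x)=-x$ (concave) and an inner function with $z(\emptyset)=0$, $z(\{1\})=z(\{2\})=2$, $z(\{1,2\})=3$, which is nonnegative, monotone and submodular and is realizable in the lemma's setting with a piecewise-linear monotone concave $f$. Then the marginal gain of item $1$ under $g_u\circ z$ is $-2$ at $\emptyset$ but $-1$ at $\{2\}$, so $g_u\circ z$ is not submodular. Hence the first claim of the lemma requires $g_u$ to be nondecreasing, precisely where your step $g(a+\delta_A)-g(a)\ge g(a+\delta_B)-g(a)$ uses it; with that hypothesis made explicit, as you propose, your argument is the correct version of the paper's proof. (Your suggested counterexample family $-x^2+cx$ does contain witnesses, but only for parameter choices making $g_u$ decreasing and not too strongly concave on the relevant range, so the purely linear $g_u(x)=-x$ is the cleaner witness.)
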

\begin{proof}
By Lemma\ref{th:lemma1}, $\sum\limits_{l \in \calG \setminus \{u\}}\matA_{ul} \sum\limits_{i \in \calI_u} r_u^if\big(\sum\limits_{j\in\calS} \matW_{ij}\big)$ is a submodular function of $\calS$.
Let us denote this term as $z(\calS)$.
It is sufficient to prove that $g(z(\calS))$ (we omit the subscript for convenience) is a submodular function.
Due to the concavity of $g$ and monotonicity of $z$ (due to the monotonicity of $f$), $\forall \calS_1, \calS_2 \text{ such that } \calS_1 \subseteq \calS_2 \subseteq \calZ$ and $\forall e \in \calZ$ such that $e \notin \calS_2$,
\begin{equation*}
g(z(\calS_1 \cup e)) - g(z(\calS_1)) \geq g(z(\calS_2 \cup e)) - g(z(\calS_2))
\end{equation*}
which completes the proof.  \hspace*{\fill}\end{proof}

\subsection{Accelerated Greedy Algorithm}
For any monotone concave function $f$ and any concave funtion $g_u$, the optimization problem \eqref{eq:opt_set} becomes a submodular maximization problem.
The problem is NP-hard and no algoithms exist to solve it exactly \cite{nemhauser1978analysis}.
The general strategy to solve the submodular maximization problem is based on the greedy heuristic as given by \citeauthor{nemhauser1978analysis} (\citeyear{nemhauser1978analysis}).
The algorithm iteratively selects an element from the ground set such that it gives the maximum value for the incremental update value $\Gs(\calG,\calI,\calS \cup \{e\}) - \Gs(\calG,\calI,\calS)$ at each iteration (the ties are broken arbitrarily).
In case of monotonic submodular functions with cardinality constraints, \citeauthor{nemhauser1978analysis} (\citeyear{nemhauser1978analysis}) gives a worst case lower bound on the optimality gap between the optimal solution and the greedy solution as given in Theorem~\ref{th:greedy_bound}.
In fact, similar bound holds for any monotonic submodular functions with matroid constraints and cardinality is a special case of matroid constraint \cite{fisher1978analysis}.

\citeauthor{minoux1978accelerated} (\citeyear{minoux1978accelerated}) proposed a faster ``accelerated'' version of the greedy algorithm using the fact that $g(\calS_{j}\cup \{l\}) -g(\calS_{j}) \leq g(\calS_i \cup \{k\}) - g(\calS_i) \implies g(\calS_i \cup \{l\}) \leq g(\calS_i \cup \{k\}), \forall j < i$.
The algorithm carries out `lazy' updates thus avoiding a complete scan to find the next item to recommend \cite{leskovec2007cost}.
By using priority queues, the retrieval at line 5 and updation at line 10 in Algorithm\ref{alg:acc_greedy} can be achieved in constant and logarithmic time respectively.
Previous experimental results on large scale datasets showed that the accelerated greedy algorithm gives a substantial speedup boost \cite{leskovec2007cost}.
Our proposed accelerated greedy algorithm for group recommendation which we call \textbf{S}ubmodul\textbf{A}r \textbf{G}roup recommendation \textbf{A}lgorithm (\textbf{SAGA}) is  given in Algorithm~\ref{alg:acc_greedy}.

\begin{theorem}[\citeauthor{nemhauser1978analysis} (\citeyear{nemhauser1978analysis})]
For a non-decreasing submodular function $f$, let $\calS^*$ be the optimizer of \eqref{eq:opt_set} and $\hat\calS$ be the set returned by the greedy algorithm iteratively adding to the current solution the item that provides the highest gain in the objective function,
then
$\qquad f(\hat{\calS}) \geq \Big(1-(1-\frac{1}{\bK})^\bK\Big) \,f(\calS^*) \geq (1-\frac{1}{e})\,f(\calS^*)$
\label{th:greedy_bound}
\end{theorem}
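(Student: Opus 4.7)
The plan is to follow the classical Nemhauser--Wolsey--Fisher argument, using only monotonicity, submodularity, and the defining property of the greedy rule. Let $\calS^*=\{o_1,\dots,o_\bK\}$ denote any optimal set of size $\bK$, and let $\hat\calS_0\subset\hat\calS_1\subset\cdots\subset\hat\calS_\bK=\hat\calS$ be the chain of greedy sets, where $\hat\calS_0=\emptyset$ and $\hat\calS_{i+1}=\hat\calS_i\cup\{\hat e_{i+1}\}$ with $\hat e_{i+1}\in\argmax_{e\notin\hat\calS_i} f(\hat\calS_i\cup\{e\})-f(\hat\calS_i)$.

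First I would upper-bound $f(\calS^*)$ in terms of $\hat\calS_i$ by monotonicity and a telescoping expansion: by monotonicity, $f(\calS^*)\leq f(\calS^*\cup\hat\calS_i)$, and then writing the right-hand side as $f(\hat\calS_i)$ plus the sum of marginal gains obtained by adding the elements $o_1,\dots,o_\bK$ of $\calS^*$ one at a time on top of $\hat\calS_i$. Submodularity lets me replace each such marginal gain $f(\hat\calS_i\cup\{o_1,\dots,o_j\})-f(\hat\calS_i\cup\{o_1,\dots,o_{j-1}\})$ by the (no-smaller) marginal $f(\hat\calS_i\cup\{o_j\})-f(\hat\calS_i)$, and then the greedy rule bounds each of these by $f(\hat\calS_{i+1})-f(\hat\calS_i)$ (since $o_j$ is a candidate the greedy algorithm could have picked at step $i+1$, or already lies in $\hat\calS_i$ in which case the marginal is zero). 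Summing $\bK$ such inequalities gives the key one-step relation
\begin{equation*}
f(\calS^*)-f(\hat\calS_i)\;\leq\;\bK\bigl(f(\hat\calS_{i+1})-f(\hat\calS_i)\bigr).
\end{equation*}

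Next I would convert this linear recurrence into a closed-form bound. Rearranging, $f(\calS^*)-f(\hat\calS_{i+1})\leq(1-\tfrac{1}{\bK})\bigl(f(\calS^*)-f(\hat\calS_i)\bigr)$. Iterating this for $i=0,\dots,\bK-1$ starting from $f(\hat\calS_0)=f(\emptyset)\geq0$ (which we may take to be $0$ after centering, or simply use monotonicity of $f$) yields
\begin{equation*}
f(\calS^*)-f(\hat\calS)\;\leq\;\bigl(1-\tfrac{1}{\bK}\bigr)^\bK f(\calS^*),
\end{equation*}
so $f(\hat\calS)\geq\bigl(1-(1-\tfrac{1}{\bK})^\bK\bigr)f(\calS^*)$. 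The final weaker form follows from the elementary inequality $(1-\tfrac{1}{\bK})^\bK\leq 1/e$ for all $\bK\geq 1$.

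The only subtle step is the second one: chaining submodularity with the greedy choice to bound every marginal by the same quantity $f(\hat\calS_{i+1})-f(\hat\calS_i)$. I expect this to be the only place where care is needed, in particular handling the case that some $o_j$ already lies in $\hat\calS_i$ (where the marginal is zero and the inequality is trivial). Everything else is either a definition, an application of monotonicity, or manipulating the resulting geometric recurrence.
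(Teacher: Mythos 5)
Your plan is correct: it is the standard Nemhauser--Wolsey--Fisher argument (monotonicity plus telescoping to get $f(\calS^*)-f(\hat\calS_i)\leq \bK\,(f(\hat\calS_{i+1})-f(\hat\calS_i))$, then iterating the geometric recurrence), and the paper itself gives no proof of this theorem --- it simply imports it by citation --- so there is nothing in the paper to diverge from. The only point worth making explicit when you write it out is that the final step needs $f(\emptyset)\geq 0$ (or the usual normalization $f(\emptyset)=0$) to pass from $f(\calS^*)-f(\emptyset)$ to $f(\calS^*)$, which you already flag and which holds for the group consensus score used here.
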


\begin{algorithm}[tb!]
   \caption{SAGA: Submodular Greedy Group Recommendation Algorithm}
   \label{alg:acc_greedy}
    \SetKwData{Left}{left}\SetKwData{This}{this}\SetKwData{Up}{up}
    \SetKwFunction{Union}{Union}\SetKwFunction{FindCompress}{FindCompress}
    \SetKwInOut{Input}{Input}\SetKwInOut{Output}{Output}
    \Input {set of items $\calX$, set of observed items $\calI$, group $\calG$, affinity matrix $\matW$, \# of recommendations \bK}
    $\calX = \calX \setminus \calI, \calS = \emptyset$ \;
    \For {$i \in \calX$}  {
        $\Delta(i) = Gscore(\calG,\calI,\{i\})$ \tcp*[r]{compute and store the marginal gain for each item in a priority queue}
    }
    \Repeat {$\lvert \calS \lvert = \bK$} {
    $i^* = \argmax_{i \in \calX} \Delta(i)$ \;
    $\delta = Gscore(\calG,\calI,\calS \cup \{i^*\}) - Gscore(\calG,\calI,\calS)$ \;
    $\Delta(i^*) = \delta$ \;

    \If { $\delta < \max_{i \in \calX \setminus \{i^*\}} \Delta(i)$ }  {
        goto 5
    }
    $\calS = \calS \cup \{i^*\}$ \;
    $\calX = \calX \setminus \{i^*\}$ \;
    }
    \Output{ set of recommendations $\calS$}
\end{algorithm}

\section{Experiments}
\label{sec:exp}
In this section, we describe the experimental setup and report the results using user groups generated from the \textit{1M} MovieLens\footnote{\url{http://grouplens.org/datasets/movielens/}} dataset.
A major issue with GR research is the difficulty of evaluating the effectiveness of the recommendations \cite{baltrunas2010group}.
Online evaluation is expensive and it  can be performed only on a very limited set of users.
Here we follow the offline experimental protocol proposed in \citeauthor{baltrunas2010group} (\citeyear{baltrunas2010group}).
We compare our proposed algorithm against two state-of-the-art group recommendation algorithms \cite{o2001polylens,amer2009group}.

\subsection{Group Formation}
The MovieLens dataset contains ratings on a scale of 1 to 5 of 6040 users for 3670 movies.
In our experiments we use users only with a minimum of 100 ratings and ended up with a set of 2945 users and 3670 movies.
Following the previous work by \citeauthor{amer2009group} and \citeauthor{baltrunas2010group}, we create groups corresponding to two common real life scenarios \emph{(i)} random groups and \emph{(ii)} similar groups.
Random groups are generated by randomly sampling a fixed number of users whereas similar groups are generated by randomly sampling from the user set provided that the individual user-user similarity is greater than a threshold value.
We set 0.60 as the similarity threshold value.
The detailed description of the group statistics is given in Table\ref{tab:group_desc}.

\begin{table}
  \centering
  \begin{tabular}{l*{4}{c}}
    \# members & 2 & 4 & 6 & 8 \\
    \hline
    random    & 294 & 146 & 98 & 72  \\
    similar   & 190 &  40 & 18 & 10  \\
    \hline
  \end{tabular}
    \caption{Number of Groups}
    \label{tab:group_desc}
\end{table}

\subsection{Protocol}
We carried out holdout validation by randomly selecting 30\% of the item set and marking it as unrated wherever the rating values are observed.
To reduce the variability in the result, the procedure is carried out five times and the reported results are the average values over the five selections.
The affinity and similarity values are calculated based on the item and user features.
The item and user features are extracted from the rating matrix using the non-negative matrix factorization based on weighted-regularized non-negative alternating least squares algorithm \cite{steck2013evaluation}.
In our experiments, the dimension of the item/user feature space is set to 150.
We used the rbf kernel as the item afinity function $h$, i.e. $\matW_{ij} =\exp(-\gamma||x_i - x_j||^2)$ where $x_i$ and $x_j$ are the $i^{th}$ and $j^{th}$ item features.
The $\gamma$ value is chosen by running the algorithm for a set of values in the range $\{2^{-3},\cdots,2^3\}$ in multiples of two, and the reported results are for the best $\gamma$ value for the respective algorithms.
For the \textit{item saturation function} $f$, we use natural logarithm $f(x) = \ln(1+x)$, and for the \textit{user saturation function} we experimented with two settings \emph{(i)} identity function $g_u(x) = x$ and \emph{(ii)} concave function $g_u(x) = \sqrt{x}$.
The user affinity matrix $\matA$ is defined as the cosine of the user feature vector i.e. $\matA_{ij} = \cos(y_i,y_j)$ where $y_i$ and $y_j$ are the $i^{th}$ and $j^{th}$ user features.
\subsection{Baselines}
We used two baselines to compare against the performance of our proposed algorithm.
First baseline is the popular aggregation strategy \textit{Average Misery} (AM) and the second baseline is the scalarization based algorithm proposed by \citeauthor{amer2009group} (\citeyear{amer2009group}) called group recommendation with fully materialized disagreement list (FM).
Both AM and FM require unobserved rating values to be generated beforehand, and we use the non-negative matrix factorization (used for the user/item feature extraction) to generate the unobserved rating values.
\subsubsection{AM}
\textit{Average Misery} is a popular \textit{score aggregation} strategy where the first $\bK$ items with the highest average score for the group is selected.
Formally, \textit{Average Misery} algorithm selects an item $i^*$ such that

\begin{equation}
i^* \in \argmax_{i \in \calZ} rel(\calG,i)
\enspace.
\end{equation}
where $rel(\calG,i)$ is defined as the sum of the preference scores for the item by the users in the group $\calG$, i.e. $rel(\calG,i) = \sum\limits_{u \in \calG} r_u^i \enspace$.

\subsubsection{FM}
In FM, an agreement term and a disagreement term are linearly combined using a trade-off parameter, and for each value of the trade-off parameter the discrete optimization problem is solved using thresholding \cite{amer2009group}.
Formally, the FM algorithm selects an item $i^*$ such that

\begin{equation*}
i^* \in \argmax_{i \in \calZ} ~\lambda \times rel(\calG,i) + (1-\lambda)(1-dis(\calG,i))
\end{equation*}
where $rel(\calG,i)$ is as defined above, and $dis(\calG,i)$ is the average pair-wise disagreement for the item $i$ for the group $\calG$.
Formally,

\begin{equation*}
dis(\calG,i) = \frac{2}{\lvert \calG \rvert (\lvert \calG \rvert - 1)} \sum_{\substack{u,v \in \calG \\ u \neq v}}\lvert r_u^i - r_v^i \rvert
\end{equation*}

\subsection{Performance Measure}
In our experiments, we measure the relevance and coverage of the recommended items to the group members. 
We use Discounted Cumulative Gain (DCG), a very popular performance measure widely used in information retrieval tasks to measure the utility of the recommended items to a user \cite{jarvelin2002cumulated}, and Popularity Stratified Recall (PSR), which measures the coverage/popularity-bias of the recommended items \cite{Steck_11_RecSys}.

\smallskip\noindent\textit{Discounted Cumulative Gain (DCG):}
DCG is used in the context of ranking.
It measures the quality of a ranked list by the sum of the graded relevance discounted by the rank of the item.
In our experiments, we used the below formulation:
\begin{equation*}
  \sum \limits_{i \in \calS} \frac{2^{r_i}-1}{log(i+1)} 
  \enspace,
\end{equation*}
where $r_i$ is the \emph{graded} relevance score of the $i^{th}$ item.
In our settings, the $i^{th}$ item is the $i^{th}$ item entering $\calS$ for the greedy algorithm.
We compute the average values of the DCG, over all the users in any group, and the reported results are average over the total groups.
Higher values for DCG is desired, as it indicates more relevant recommendations appears at the top of the recommendation list.

\smallskip\noindent\textit{Popularity Stratified Recall (PSR):}
A good recommendation strategy should cover items which are relevant to the users but rare.
PSR is proposed to measure the ability of a recommender system to recommend items from the tail of the item-popularity distribution.
The submodular formulation of our objective increases the chance of recommending unpopular relevant items.
Formally, PSR is defined as:
{
\begin{equation*}
\frac{1}{\lvert \calG \rvert}\frac{\sum\limits_{u \in \calG}\sum\limits_{i \in \calS_u^{+}} \Big(\frac{1}{N_i^+}\Big)^\beta}{\sum\limits_{u \in \calG}\sum\limits_{i \in \calT_{u}} \Big(\frac{1}{N_i^+}\Big)^\beta}
\enspace,
\end{equation*}
where $\calS_u^{+}$ is the set of recommended items that are known to be relevant for user $u$ (among the $\bK$ recommended items),
${\calT}_u$ is the set of items in the test set that are known to be truly relevant for user $u$,
}
$N_i^+$ is the number of relevant ratings for item $i$ in the test set and $\beta \in (0,1)$ is a hyperparameter which adjusts for the popularity bias.
We used $\beta=0.5$ in our experiments.
Higher values for PSR indicates more relevant unpopular recommendations appears at the top of the recommendation list.

\subsection{Results \& Discussion}
\subsubsection{Random Groups}
\begin{figure}[tb!]
\centering
\begin{tabular}{@{}c@{}}
\includegraphics[keepaspectratio=true,scale=0.59]{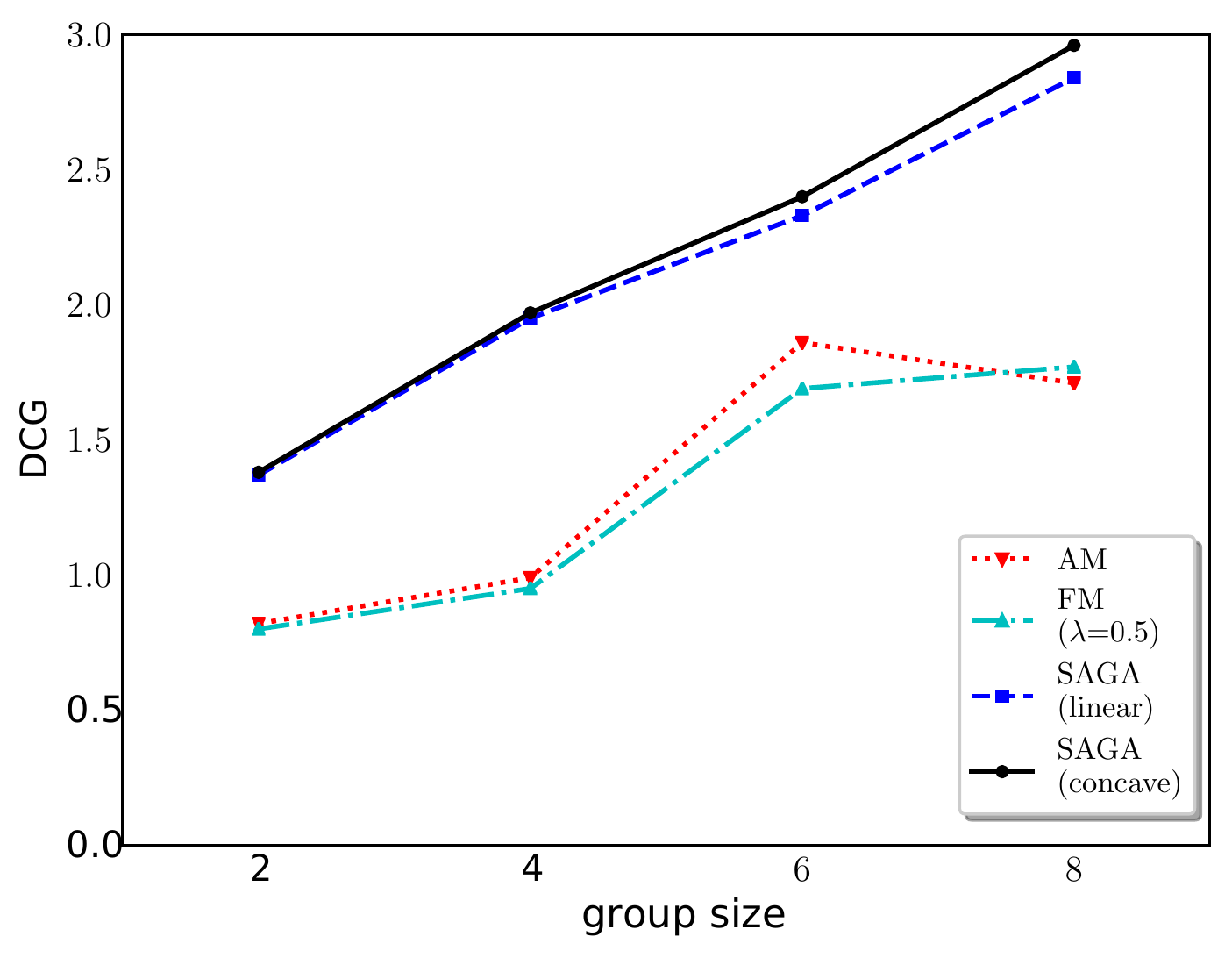}
\end{tabular}
\caption{DCG as function of group size (random)}
\label{fig:dcg_rg_gz}
\end{figure}

\begin{figure}[tb!]
\centering
\begin{tabular}{@{}c@{}}
\includegraphics[scale=0.59]{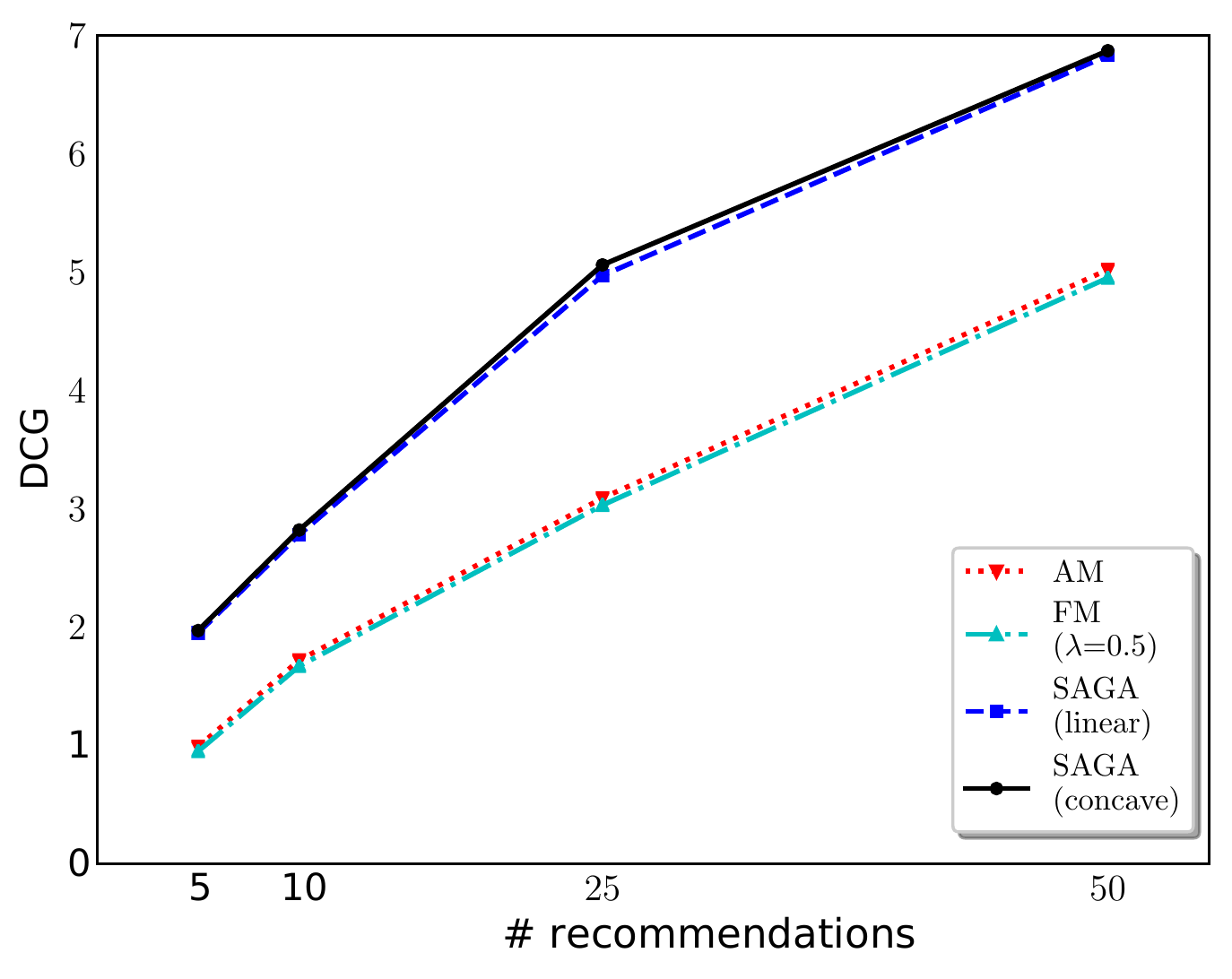}
\end{tabular}
\caption{DCG as function of \# recommendation (random)}
\label{fig:dcg_rg_rz}
\end{figure}

The results for the random groups are given in Figures~\ref{fig:dcg_rg_gz} and \ref{fig:dcg_rg_rz}.
Figure~\ref{fig:dcg_rg_gz}  depicts the performance of different algorithms as a function of group size ($\lvert\calG\lvert$), for a fixed number of recommendations ($\bK$ = 5) for random groups.
The performance of AM and FM algorithms are very similar with AM performing marginally better than FM for groups of smaller sizes.
In effect, the leverage of employing disagreements between group members for recommendation is insignificant in case of random groups.
Similarly, the performance of the linear and concave versions of the proposed SAGA algorithm are very similar.
Howbeit, SAGA ,in general, performs significantly better than both AM and FM.

Figure~\ref{fig:dcg_rg_rz} illustrates the performance of different algorithms as a function of recommendation size ($\bK$), for a fixed group size $(\lvert \calG\rvert= 4)$.
The plot follows the same pattern as in the case of varying group sizes for random groups.
The performance of AM and FM are very close to each other whereas the performance of linear and concave version of the SAGA algorithm are close to each other.
Yet, the two formulations of SAGA give superior DCG values compared to both AM and FM.
For different values of the group size ($\calG \in \{10, 25, 50\}$), we observed the same pattern.
Due to space limitation, we omitted the corresponding plots.

\begin{figure}[tb!]
\centering
\begin{tabular}{@{}c@{}}
\includegraphics[keepaspectratio=true,scale=0.58]{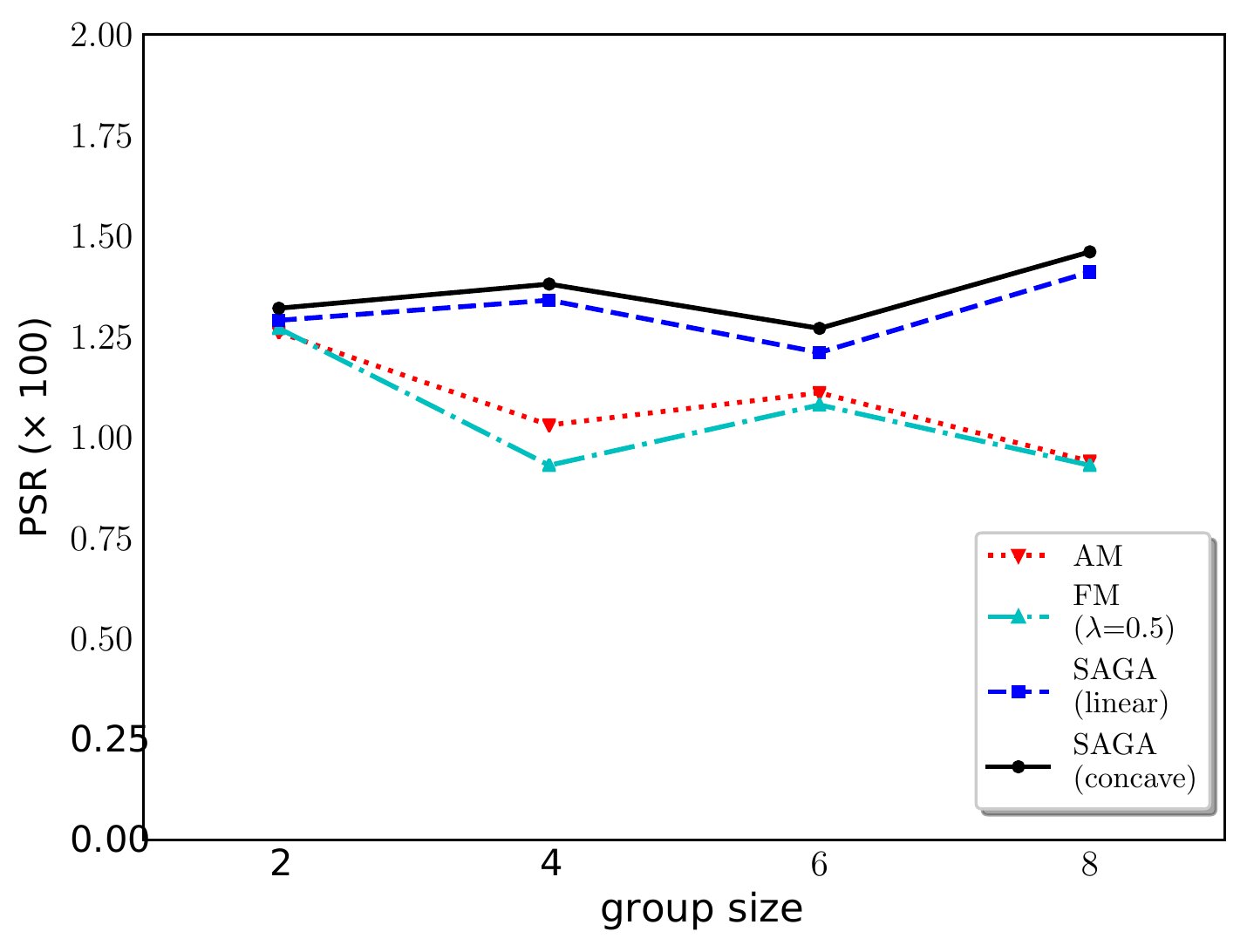}
\end{tabular}
\caption{PSR as function of group size (random)}
\label{fig:sr_rg_gz}
\end{figure}

\begin{figure}[tb!]
\centering
\begin{tabular}{@{}c@{}}
\includegraphics[keepaspectratio=true,scale=0.59]{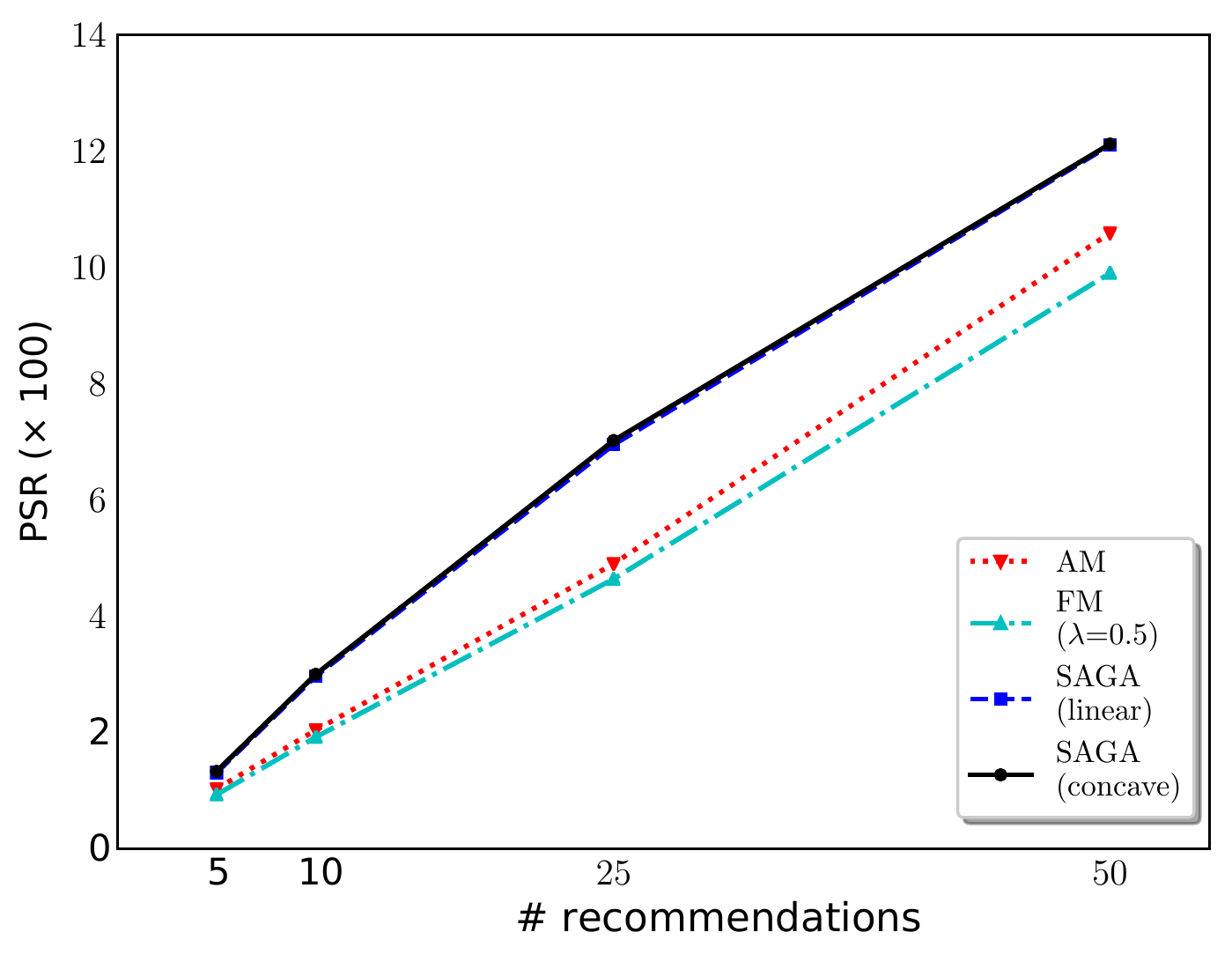}
\end{tabular}
\caption{PSR as function of \# recommendation (random)}
\label{fig:sr_rg_rz}
\end{figure}
Figures~\ref{fig:sr_rg_gz} and \ref{fig:sr_rg_rz} shows the coverage performances of different algorithms as functions of group size ($\lvert\calG\rvert$) and number of recommendations ($\bK$).
The AM and FM algorithms perform equally but sub-optimal.
The SAGA linear and concave versions performs better than AM and FM algorithms with the SAGA concave version performs only marginally better than the linear version.
Thus SAGA algorithms recommend relevant but novel items.

\subsubsection{Similar Groups}
\begin{figure}[tb!]
\centering
\begin{tabular}{@{}c@{}}
\includegraphics[keepaspectratio=true,scale=0.59]{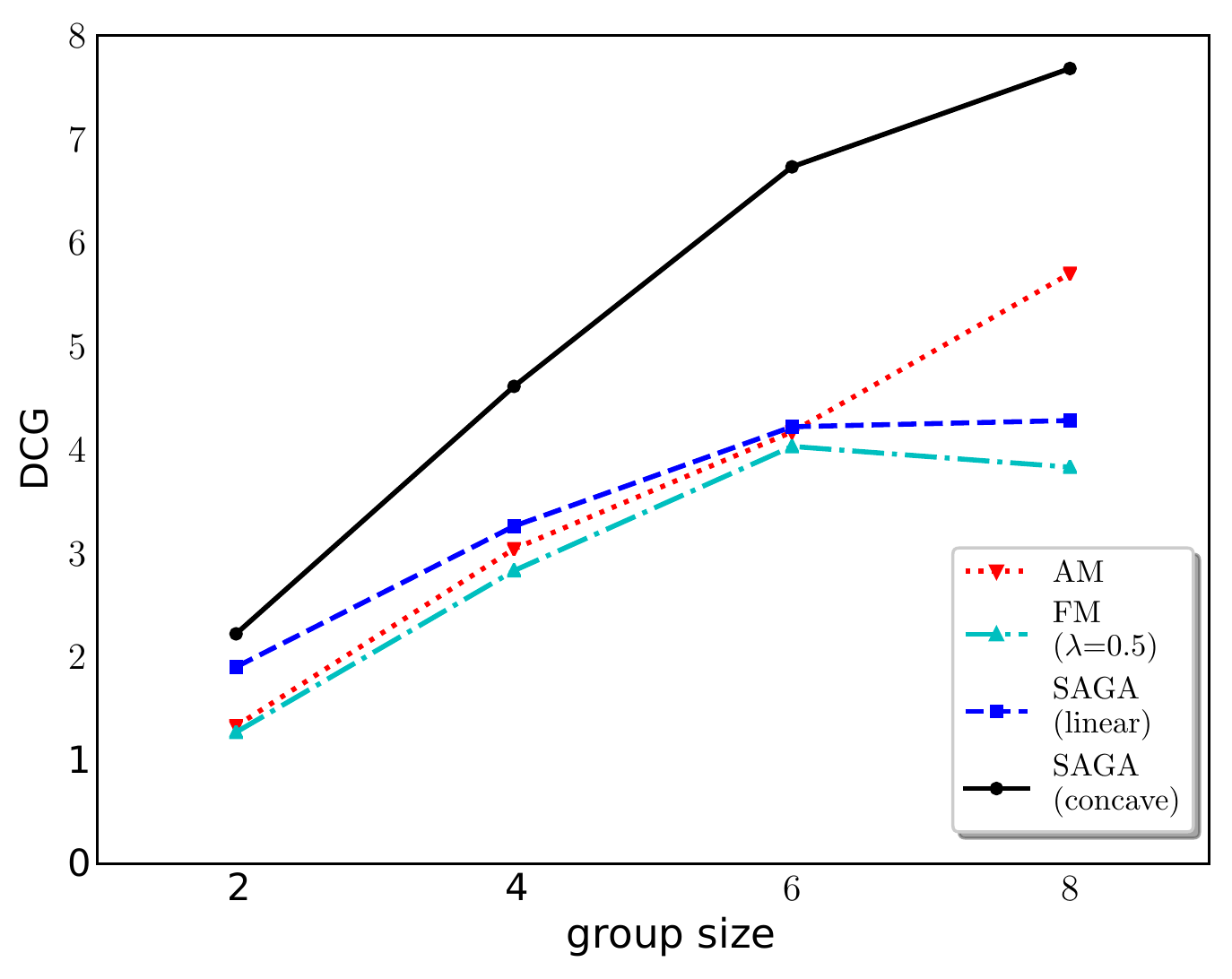}
\end{tabular}
\caption{DCG as function of group size (similar)}
\label{fig:dcg_sg_gz}
\end{figure}

\begin{figure}[tb!]
\centering
\begin{tabular}{@{}c@{}}
\includegraphics[keepaspectratio=true,scale=0.59]{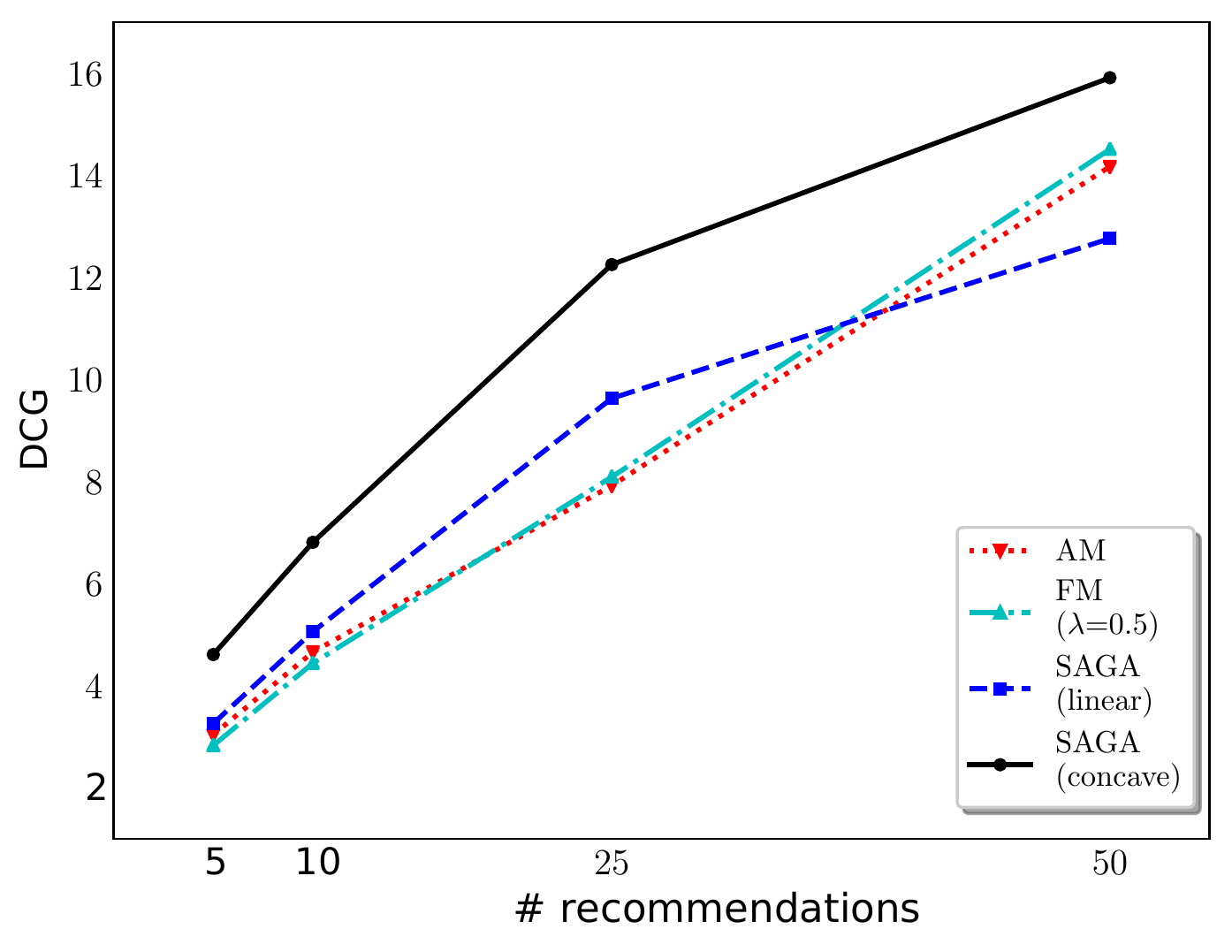}
\end{tabular}
\caption{DCG as function of \# recommendation (similar)}
\label{fig:dcg_sg_rz}
\end{figure}

The results for the similar groups are given in Figures~\ref{fig:dcg_sg_gz} and \ref{fig:dcg_sg_rz}.
Figure~\ref{fig:dcg_sg_gz}  illustrates the performance of different GR algorithms as a function of group size, for a fixed number of recommendations ($\bK$ = 5) for similar groups.
It is very clear from the plot that AM performs better than FM.
Also, as the group size is increased the performance improvement becomes significant.
The linear SAGA algorithm performs marginally better than AM for smaller number of groups, but as more number of users are added to the group, the performance deteriorates.
Unlike in the case of random groups, in case of similar groups the concave formulation of the \textit{user saturation function} gives significant performance boost over AM, FM and SAGA with linear \textit{user saturation function}.
We observed the same pattern by varying the size of the recommendation set.

In Figure~\ref{fig:dcg_sg_rz} the DCG values of different algorithms are plotted as a function of the size of the recommendation set for a fixed group size $(\lvert \calG \rvert) = 4$.
The performance of AM and FM algorithms follows an identitical pattern.
For smaller number of recommendations, AM performs better than FM and as the size of the recommendation set increases, FM outperforms AM.
But in both cases, the significance is negligible.
The linear SAGA algorithm performance degrades below FM and AM as the recommendation size set increases.
The concave SAGA algorithm gives the best DCG values for all recommendation set size and the performance imporovement is significant.
We achieved very similar results by varying the group size.

The performances of coverage metrics for different algorithms are given in Figures~\ref{fig:sr_sg_gz} and \ref{fig:sr_sg_rz}.
Unlike in the case of random groups, all the algorithms perform equally well, and there is no added benefit of using SAGA algorithm if ones motive is only in serendipitous recommendations.
\begin{figure}[tb!]
\centering
\begin{tabular}{@{}c@{}}
\includegraphics[keepaspectratio=true,scale=0.58]{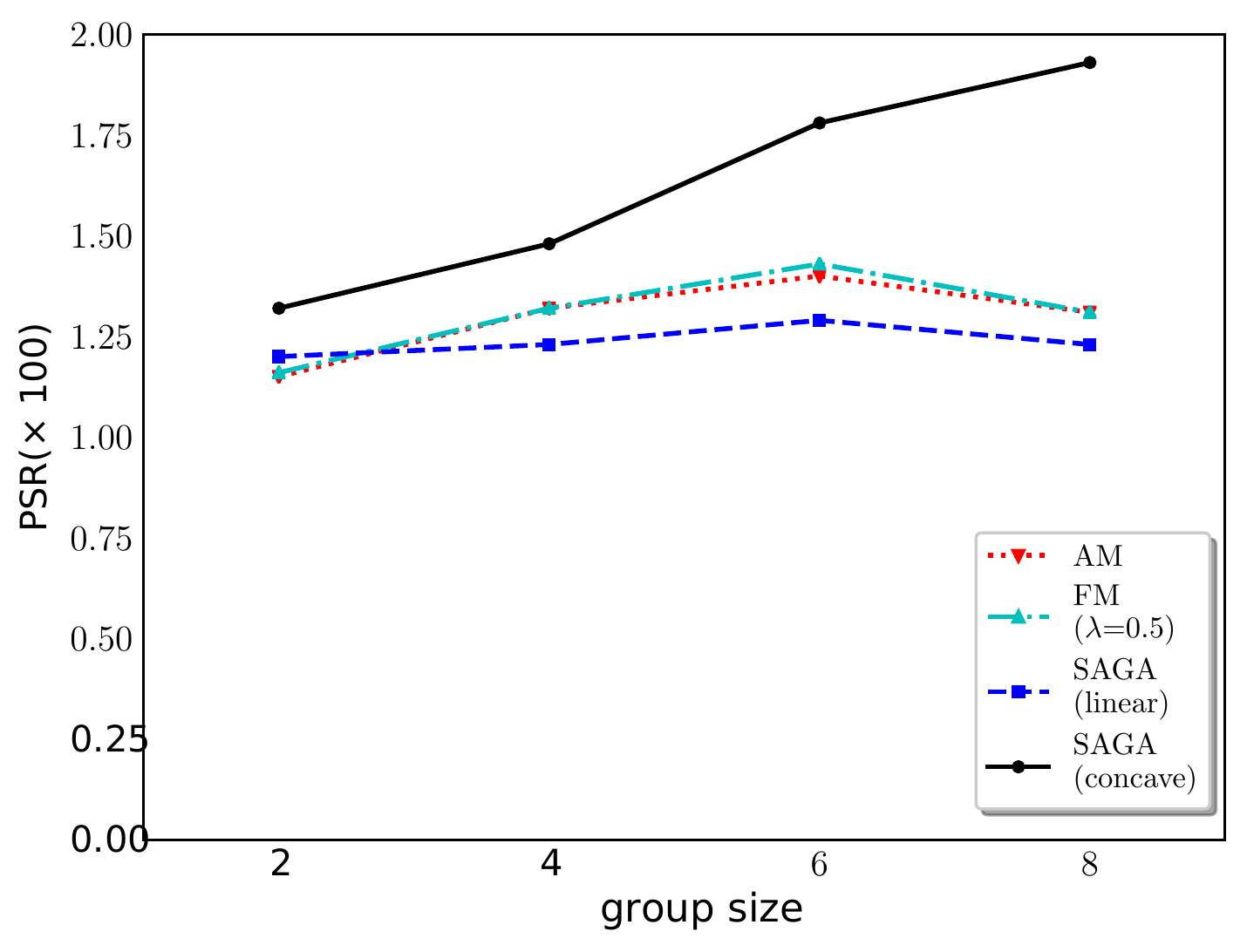}
\end{tabular}
\caption{PSR as function of group size (similar)}
\label{fig:sr_sg_gz}
\end{figure}

\begin{figure}[tb!]
\centering
\begin{tabular}{@{}c@{}}
\includegraphics[keepaspectratio=true,scale=0.59]{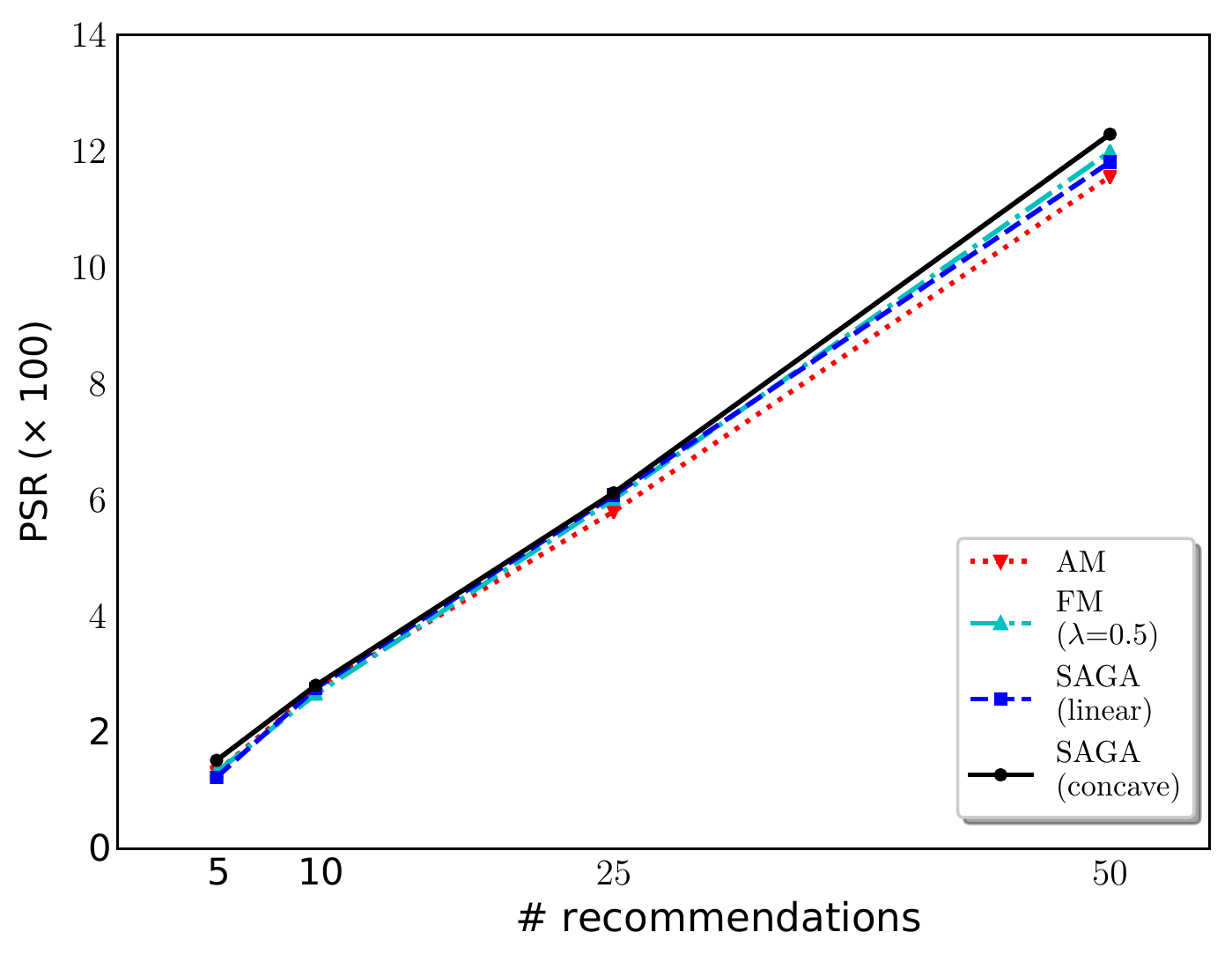}
\end{tabular}
\caption{PSR as function of \# recommendation (similar)}
\label{fig:sr_sg_rz}
\end{figure}

We will now analyze the performance degradation of the concave SAGA algorithm compared to the linear SAGA algorithm for random groups.
In case of random groups, for each group the average user-user similarity value is low (average value is 0.14).
Hence for smaller sized groups the performance gain obtained by employing submodular structure over the \emph{user saturation function} is less than employing the linear structure.
In other words, for smaller groups since the user interests are very different and the relevant movies are often conflicting, adding an item by the diminishing return property does not result in overall performance gain.
We argue that for medium and larger groups, employing a submodular structure does result in performance improvement.
Figure~\ref{fig:dcg_rg_gz8} contains the DCG values for the random group of size 8.
It can be verified that submodularity over \emph{user saturation function} yields better performance.

\begin{figure}[tb!]
\centering
\begin{tabular}{@{}c@{}}
\includegraphics[keepaspectratio=true,scale=0.59]{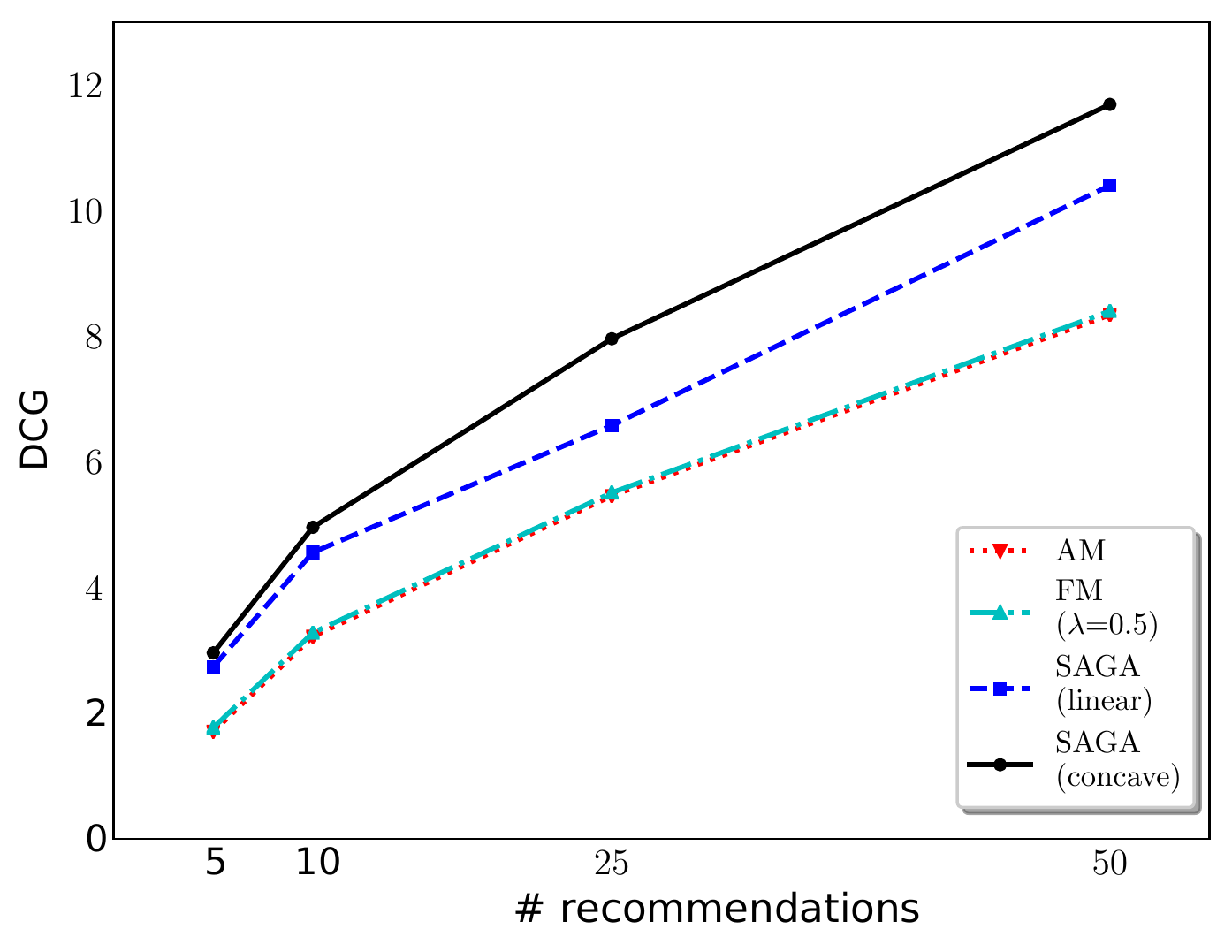}
\end{tabular}
\caption{DCG as function of \# recommendation (random)}
\label{fig:dcg_rg_gz8}
\end{figure}

\section{Conclusion}
\label{sec:conc}
The core idea presented in the paper is that the group recommendation problem can be
modeled as a submodular optimization problem.
In fact, several important modeling options in group recommendation and social choice theory turn out to be specific instances of the group consensus score function we proposed in this work.
The advantage of our approach is that the bundle of items recommended to a  group
not only depends on the aggregate preference of users (in the group) towards the items
but also the affinity (or dissimilarity) between the items.
Experiments on real data sets attest to the efficacy of our approach.

\section{Acknowledgment}
S.P.P thanks Nicolas Usunier and Yves Grandvalet for many fruitful discussions which led us to the core idea presented in this paper.
We thank our anonymous reviewers for their valuable comments and suggestions.

\bibliographystyle{aaai}
\bibliography{paper}
\end{document}